\newtheorem{corollary}[theorem]{Corollary}
\newcommand{\calA}{{\ensuremath{\cal A}}}
\newcommand{\calB}{{\ensuremath{\cal B}}}
\newcommand{\calC}{{\ensuremath{\cal C}}}
\newif\iffull
\begin{document}
\title{Drawing outer-1-planar graphs revisited\footnote{Research supported by NSERC.
The author would like to thank Jayson Lynch for pointing out that the lower bound
also holds for IC-planar graphs.}
}
\author[first]{Therese Biedl}{biedl@uwaterloo.ca}

\affiliation[first]{David R.~Cheriton School of Computer
Science, University of Waterloo, Waterloo, Ontario N2L 1A2, Canada. }

%\doi{}
%\Issue{0}{0}{0}{0}{0}
%\HeadingAuthor{T. Biedl}
%\HeadingTitle{Drawing outer-1-planar graphs revisited}

%\submitted{}%
%\reviewed{}%
%\revised{}%
%\accepted{}%
%\final{}%
%\published{}%
%\type{}%
%\editor{}%

\maketitle

\begin{abstract}
In a recent article (Auer et al, Algorithmica 2016) it was claimed
that every outer-1-planar graph has a planar visibility representation
of area $O(n\log n)$.  In this paper, we show that this is
wrong: There are outer-1-planar graphs that require $\Omega(n^2)$
area in any planar drawing.  
Then we give a construction (using crossings, but preserving a
given outer-1-planar embedding) that results in an orthogonal
box-drawing with $O(n\log n)$ area and at most two bends
per edge.
\end{abstract}

%%%%%%%%%%%%%%%%%%%%%%%%%%%%%%%%%%%%%%%%%%%%%%%%%%%%%%%%%%%%%%%%%%%%%%%%
\section{Introduction}

A {\em 1-planar} graph is a graph that can be drawn in the plane
such that every edge has at most one crossing.  
Many graph-theoretic and graph-drawing results are 
known for 1-planar graphs, see for example \cite{KLM17}.  One
subclass 
\iffull 
of 1-planar graphs 
\fi
is the class of {\em outer-1-planar (o1p) graphs},
which have a 1-planar drawing such that
additionally every vertex is on the outer-face (the unbounded
region of the drawing).

Outer-1-planar graphs were introduced by Eggleton \cite{Eggleton}
and studied by many other researchers \cite{Argyriou, Auer16, DE12,HEL+12}.
Of particular interest to us is a paper by 
Auer, Bachmeier, Brandenburg, Glei{\ss}ner, Hanauer, Neuwirth and
Reislhuber \cite{Auer16}.  Among others, they characterize the
forbidden minors of outer-1-planar graphs, give a recognition
algorithm, and give bounds on various
graph parameters such as number of edges, treewidth, stack number and queue number.
Finally they turn to drawing algorithms for outer-1-planar graphs,
and here claim the following result:  ``Every o1p graph has a planar 
visibility representation in $O(n \log n)$ area.''  (Theorem 8).

In this paper, we show that this result is incorrect.  Specifically,
we construct an $n$-vertex outer-1-planar graph such that
in {\em any} planar embedding there are $\Omega(n)$ nested
triangles (we give detailed definitions below).
It is known \cite{FPP88} that any planar graph drawing with 
$k$ nested cycles requires width and height at least
$2k$ in any planar poly-line drawing.  Since any planar visibility
representation can be converted into a poly-line drawing of
asymptotically the same width and height \cite{Bie-GD14}, 
any planar visibility representation of our graph uses $\Omega(n^2)$ 
area and the
claim by Auer et al.~is incorrect.

Then we give drawing algorithms for outer-1-planar graphs that do
achieve area $o(n^2)$.  These drawings have crossings, but we can
reflect exactly the given outer-1-planar embedding.  Our construction gives
orthogonal box-drawings with area $O(n\log n)$ and at most two bends per edge; they can
be converted to poly-line drawings of the same area.

To our knowledge, the only prior result on orthogonal drawings of
outer-1-planar drawings (other than the one by Auer et al.~that
we disprove) is by Argyriou et al.~\cite{Argyriou}; they showed that 
every outer-1-planar graph with maximum degree 4 has an outer-1-plane
point-orthogonal drawing with $O(n^2)$
area and at most 2 bends per edge.   

\iffalse
There are also some results
on straight-line drawings:  Because outer-1-planar graphs are planar,
they have planar straight-line drawings of area $O(n^2)$, and one
can achieve the same while preserving the given outer-1-planar
embedding \cite{Auer16}.  Also, any straight-line drawing result
for outer-planar graphs can easily be used to create drawings
of outer-1-planar with asymptotically the same area and at most
one bend per edge \cite{Bul20}; based on a recent breakthrough
by Frati, Patrignani and Roselli \cite{FPR20} one can therefore
create poly-line drawings of outer-1-planar graphs in 
$O(n^{1+\varepsilon})$ area for any $\varepsilon>0$.
\fi

\section{Definitions}

We assume familiarity with graphs, see e.g.~\cite{Die12}.
A {\em planar graph} is a graph that can be drawn in the plane
without any crossing.    Such a drawing $\Gamma$ defines the
{\em regions}, which are the connected parts of $\mathbb{R}^2
\setminus \Gamma$.  The infinite region is called the {\em outer-face}.
A planar drawing defines the {\em planar embedding}
consisting of the {\em rotation scheme} (the clockwise order of edges 
at each vertex) and the {\em outer-face} (a lists of
vertices and edges on the outer-face).  A graph is called
{\em outer-planar} if it has a planar embedding where all vertices
are on the outer-face.

A {\em 1-planar graph} is a graph that can be drawn in the plane
such that every edge has at most one crossing.  
As above one defines {\em regions} and {\em outer-face} of such a drawing. 
An {\em outer-1-planar} graph is a graph with a 1-planar drawing
where additionally all vertices are on the outer-face.  Any
such drawing is described via an {\em outer-1-planar embedding},
consisting of the rotation scheme,
the outer-face, and information
as to which pair of edges cross.  

In this paper we almost only consider {\em maximal outer-planar}
and {\em maximal outer-1-planar} graphs, which are those graphs
where as many edges as possible have been added while staying in
the same graph class and having no duplicate edges or loops.

A {\em poly-line drawing} of a graph is a drawing where vertices 
are points and edges are polygonal curves; a {\em bend}
is the transition-point between segments of the polygonal curve.
We also consider {\em orthogonal
box-drawings}, where vertices are represented by axis-aligned
boxes and edges are polygonal curves with
only horizontal and vertical segments.  A special kind of orthogonal
box-drawing is a {\em visibility representation} where edges have 
no bends.

The orthogonal box-drawings created in this paper are somewhat specialized 
in that vertices are {\em flat}: All vertex-boxes are actually
horizontal line segments (in the figures, we show them thickened into
a thin rectangle).    We call such a vertex-box a {\em bar}
and such an orthogonal box-drawing an {\em orthogonal bar-drawing}.

We assume (without further mentioning) that all our drawings are
{\em grid-drawings}, i.e., all defining features (vertex-points,
endpoints of vertex-bars, bends) are placed at points with integer
coordinates.  We measure the {\em width} and {\em height}
of a grid-drawing as the number of vertical/horizontal grid-lines that intersect the 
smallest enclosing bounding box of the drawing.  We call a drawing
{\em order-preserving} if it exactly reflects a given (planar
or 1-planar) embedding of the graph.

%%%%%%%%%%%%%%%%%%%%%%%%%%%%%%%%%%%%%%%%%%%%%%%%%%%%%%%%%%%%%%%%%%%%%%%%
\section{Lower bound}

In this section, we construct
an outer-1-planar graph that requires $\Omega(n^2)$ area in any
planar poly-line drawing. 
Our graph $G_L$ (for $L\geq 2$ even)
consists of a $2\times L$-grid with every second region filled
with a crossing.  Clearly this is an outer-1-planar graph, see 
Figure~\ref{fig:lower}.  
Enumerate the vertices of $G_L$ as in the figure.

\begin{figure}[ht]
\hspace*{\fill}
\includegraphics[width=0.5\linewidth,page=3]{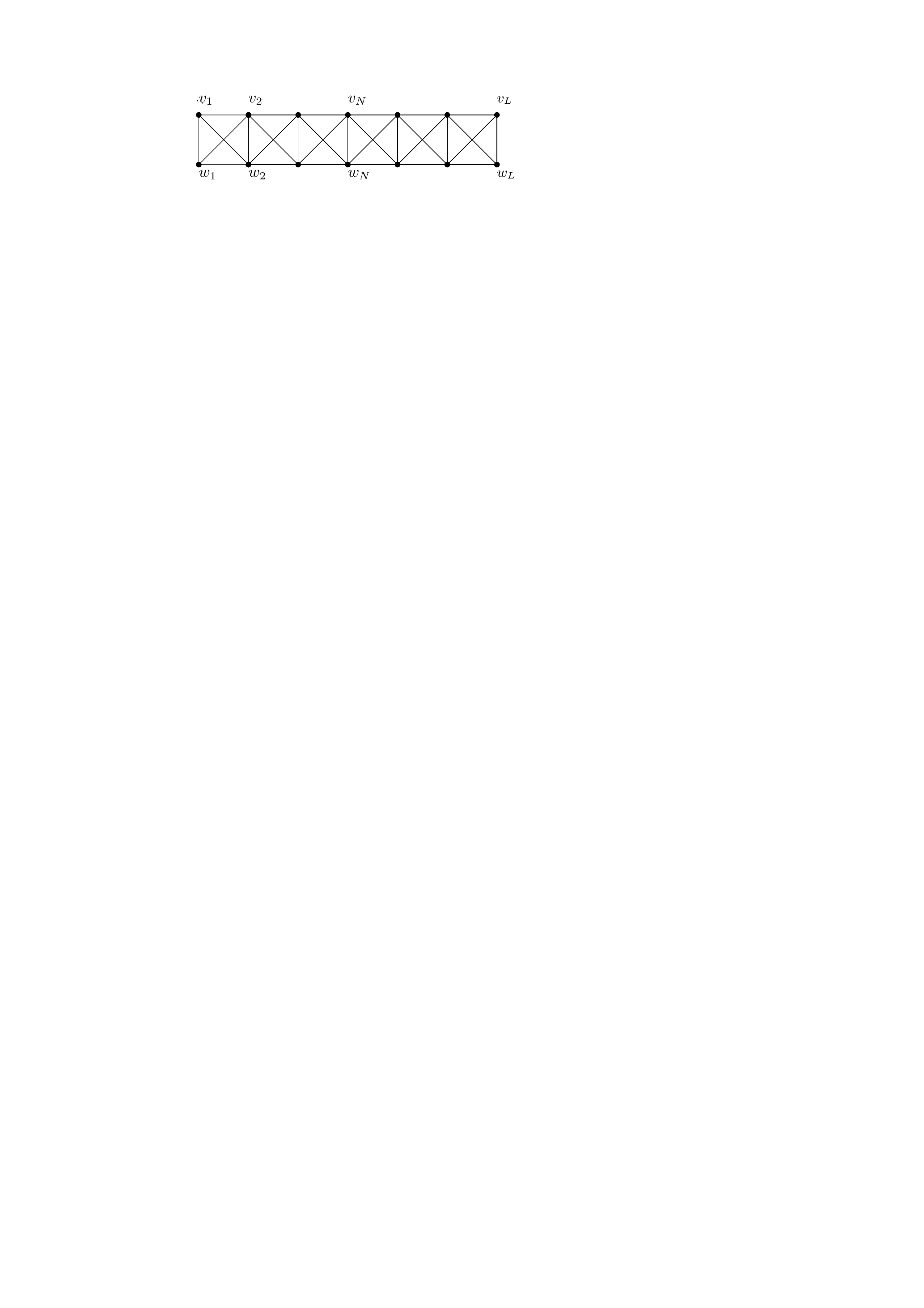}
\hspace*{\fill}
\includegraphics[width=0.35\linewidth,page=4]{lower.pdf}
\hspace*{\fill}
\caption{The outer-1-planar graph $G_8$, and how to find nested triangles.}
\label{fig:lower}
\end{figure}

It is known that all outer-1-planar graphs have a planar
drawing \cite{Auer16}, but
they can have many different planar embeddings. 
However, we can show that for our graph $G_L$, all
planar embeddings are bad in some sense.  

Call a set of disjoint triangles $T_1,\dots,T_\ell$ 
{\em nested} (in a fixed planar embedding) 
if for $i=2,\dots,\ell$ the region bounded by $T_i$ includes 
all vertices of $T_1\cup \dots \cup T_{i-1}$.

\begin{lemma}
\label{lem:lower}
Fix $L\geq 2$ even.  Any planar embedding $\Gamma$ of $G_L$
with $(v_L,w_L)$ on the outer-face
contains $L/2$ nested triangles.
\end{lemma}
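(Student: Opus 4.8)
\emph{Setup.} I would first record the graph structure. Enumerate the vertices as $v_1,\dots,v_L$ (top row) and $w_1,\dots,w_L$ (bottom row), so that the crossed cells of $G_L$ induce $L/2$ pairwise vertex-disjoint copies of $K_4$, namely $Q_k$ on $\{v_{2k-1},v_{2k},w_{2k-1},w_{2k}\}$ for $k=1,\dots,L/2$; consecutive gadgets $Q_k,Q_{k+1}$ are joined only by the two ``rung'' edges $v_{2k}v_{2k+1}$ and $w_{2k}w_{2k+1}$ coming from the uncrossed cell between them. Since every triangle of $G_L$ needs three mutually adjacent vertices and no rung lies on a triangle, every triangle of $G_L$ is contained in a single gadget. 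The plan is therefore to extract one triangle $T_k$ from each $Q_k$ and to prove $T_1\subset T_2\subset\cdots\subset T_{L/2}$; as the gadgets are vertex-disjoint this immediately yields $L/2$ disjoint nested triangles.

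\emph{Separation.} The key tool is that the two rungs between $Q_k$ and $Q_{k+1}$ form an edge cut separating the ``left part'' $\{v_i,w_i: i\le 2k\}$ from the ``right part'' $\{v_i,w_i: i\ge 2k+1\}$. Hence in any planar embedding the $4$-cycle $R_k=v_{2k}\,v_{2k+1}\,w_{2k+1}\,w_{2k}$ bounding the uncrossed cell is a separating cycle, with the left part on one side and the right part on the other. Because $(v_L,w_L)$ lies on the outer face and $v_L,w_L$ belong to the right part, the right part occupies the unbounded side, so the left part---that is, $Q_1,\dots,Q_{k-1}$ together with the ``left port'' $\{v_{2k-1},w_{2k-1}\}$ of $Q_k$---lies strictly inside $R_k$. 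As $k$ increases these interiors only grow, so $R_1,\dots,R_{L/2-1}$ are nested.

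\emph{Locating the enclosed vertex.} In any planar embedding of the $K_4$ $Q_k$ exactly one vertex is enclosed by the triangle formed by the other three; call it $\iota_k$, and set $T_k$ to be that bounding triangle (the three non-$\iota_k$ vertices). I claim $\iota_k$ is always one of the two left-port vertices $v_{2k-1},w_{2k-1}$. Indeed a right-port vertex such as $v_{2k}$ is incident to an edge leaving $Q_k$ on the right: for $k<L/2$ this edge is the rung $v_{2k}v_{2k+1}$, whose endpoint $v_{2k+1}$ sits outside $R_k$ and hence outside the $Q_k$-triangle that would enclose $v_{2k}$ (that triangle lies inside $R_k$), which is impossible without a crossing; for $k=L/2$ the right-port vertices are $v_L,w_L$, which are on the outer face and so cannot be enclosed at all. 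Thus $\iota_k\in\{v_{2k-1},w_{2k-1}\}$ and $T_k$ consists of the two right-port vertices $v_{2k},w_{2k}$ together with the remaining left-port vertex.

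\emph{Nesting and the main obstacle.} It remains to show $T_k$ encloses $R_{k-1}$ (and hence, by induction, $T_{k-1}$ and everything inside it). Here $\iota_k$ is a left-port vertex lying strictly inside $T_k$; its only neighbour outside $V(Q_k)$ is its grid-neighbour in $Q_{k-1}$, so that neighbour is dragged strictly inside $T_k$, and then planarity together with the connectivity of the left part forces all of $R_{k-1}$ and its interior inside $T_k$. This completes the induction and produces the nested triangles $T_1\subset\cdots\subset T_{L/2}$. I expect the genuine difficulty to be exactly this last enclosure step: because each $K_4$ admits four combinatorially distinct embeddings, one cannot fix the triangle in advance, and the argument must locate $\iota_k$ and propagate the enclosure purely combinatorially. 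The separation/edge-cut structure of the second step is what makes this robust across all planar embeddings, so I would establish that carefully before attempting the inductive enclosure.
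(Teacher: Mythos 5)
Your ``Separation'' step is false, and it is the step on which your location of $\iota_k$ for $k<L/2$ rests. A $2$-edge cut does not force the corresponding $4$-cycle to separate the two shores in a planar embedding: $R_k=v_{2k}v_{2k+1}w_{2k+1}w_{2k}$ has no chords, so it may simply bound an empty quadrilateral face, with \emph{both} the left part and the right part drawn outside it. Concretely, for $G_4$ with $(v_4,w_4)$ on the outer face: embed $Q_2$ with outer triangle $v_4w_4w_3$ and $v_3$ enclosed, place all of $Q_1$ inside the face $v_3v_4w_3$ with outer triangle $v_2w_1w_2$ and $v_1$ enclosed, and attach the rungs $v_2v_3$, $w_2w_3$ so that $v_2v_3w_3w_2$ becomes a face. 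One checks (e.g.\ via Euler's formula: the eight faces are the three inner triangles of each $K_4$, the empty quadrilateral $R_1$, the hexagon $v_3v_2w_1w_2w_3v_4$, and the outer triangle) that this is a valid planar embedding, yet $v_1,w_1$ lie outside $R_1$. So the inference ``the right part is on the unbounded side of $R_k$, hence the left part is inside $R_k$'' is a non sequitur, and your argument that a right-port vertex cannot be $\iota_k$ (``$v_{2k+1}$ sits outside $R_k$ \dots\ that triangle lies inside $R_k$'') collapses for every $k<L/2$. (A smaller slip: $v_{2k+1}$ is a vertex \emph{of} $R_k$, not outside it, and $T_k$ cannot literally ``enclose $R_{k-1}$'' since they share a vertex.)

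The good news is that the conclusions of your third and fourth steps are correct, and the mechanism you deploy in the fourth step is already sufficient to prove the third --- you do not need $R_k$ at all. If $v_{2k}$ were the enclosed vertex of $Q_k$, then its only neighbour outside $Q_k$, namely $v_{2k+1}$, would have to lie in one of the three faces of $Q_k$'s induced embedding incident to $v_{2k}$, all of which are inside the enclosing triangle $v_{2k-1}w_{2k-1}w_{2k}$; since the component of $G_L\setminus V(Q_k)$ containing $v_{2k+1}$ is connected, disjoint from that triangle, and contains $v_L$ and $w_L$, the whole right part --- including $v_L,w_L$ --- would be strictly enclosed, contradicting the hypothesis that $(v_L,w_L)$ is on the outer face. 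This repaired argument is essentially what the paper does, except that the paper organizes it as an induction that peels off the rightmost $K_4$ (showing its outer triangle is the outer face of the entire drawing and recursing on the copy of $G_{L-2}$ inside), whereas you treat all $L/2$ gadgets simultaneously; with the fix above, your simultaneous version goes through and is a legitimate alternative presentation, but as written the proof is broken exactly at the claim you flagged as the robust foundation.
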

\begin{proof}
Set $K:=\{v_L,w_L,v_{L-1},w_{L-1}\}$.   
These four vertices form a $K_4$; its induced embedding $\Gamma_K$ 
is hence unique up to renaming.  By assumption the outer-face $T$ of $\Gamma_K$ 
contains $v_L,w_L$  and one vertex $y\in \{v_{L-1},w_{L-1}\}$; 
set $x=\{v_{L-1},w_{L-1}\}\setminus y$.

If $L=2$, then we are done (use triangle $T$).
If $L>2$, then graph
$G':=G_L\setminus K$ is connected, so must reside entirely within 
one face $f$ of $\Gamma_K$.
Graph $G'$ contains neighbours of $x$ and $y$, so face $f$
must contain both $x$ and $y$.  Since $x$
is not on the outer-face of $\Gamma_K$, face $f$ is not the
outer-face of $\Gamma_K$.  So no  vertex of $G_L\setminus K$ is
in the outer-face of $\Gamma_K$, making $T$ the
outer-face of the entire drawing $\Gamma$. 

Observe that $G'=G_L\setminus K$ is a copy of $G_{L-2}$.
Since both $v_{L-2}$ and $w_{L-2}$ have neighbours in $\{x,y\}$,
edge $(v_{L-2},w_{L-2})$ is on the outer-face of the induced drawing $\Gamma'$
of $G'$.  By induction, $\Gamma'$ contains $L/2-1$
nested triangles $T_1,\dots,T_{L/2-1}$.
Adding the outer-face $T$ to this
gives the desired set of nested triangles for $G$
since $G_{L-2}$ resides inside $T$ and is vertex-disjoint from it.
\end{proof}

\begin{theorem}
\label{thm:lower}
There exists an $n$-vertex outer-1-planar graph that requires width and
height at least $n/4$ in any planar poly-line grid-drawing.
\end{theorem}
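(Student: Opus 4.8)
The plan is to take $G_L$ with $n=2L$ as the witness graph and to combine Lemma~\ref{lem:lower} with the cited bound from~\cite{FPP88}, which converts $k$ nested cycles into a lower bound of $2k$ on both width and height. Lemma~\ref{lem:lower} already supplies $L/2$ nested triangles \emph{whenever} the edge $(v_L,w_L)$ lies on the outer-face, so the only substantive work is to show that \emph{every} planar drawing of $G_L$ — not just those satisfying that hypothesis — contains $\Omega(L)$ nested triangles. I expect to lose precisely a factor of two in this step, which is exactly why the stated bound is $n/4$ rather than $n/2$.

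First I would isolate the rigid ``backbone'' of $G_L$. The $L/2$ crossed squares are vertex-disjoint copies $Q_1,\dots,Q_{L/2}$ of $K_4$ (ordered left to right), and consecutive copies are joined only by the two edges $v_{2i}v_{2i+1}$ and $w_{2i}w_{2i+1}$. Hence each pair $\{v_{2i},w_{2i}\}$ is a $2$-cut separating $Q_1,\dots,Q_i$ from $Q_{i+1},\dots,Q_{L/2}$, so in any planar embedding the chain of $K_4$'s is laid out linearly, and every face is ``located'' either at one end of the chain, strictly between two consecutive copies, or inside a single copy.

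Next, given an arbitrary planar drawing with outer-face $f_0$, I would determine the slot $j$ that $f_0$ occupies in this linear order, splitting the chain into a left arm $Q_1,\dots,Q_j$ and a right arm $Q_{j+1},\dots,Q_{L/2}$. The induced drawing of either arm is a copy of a smaller $G_{2j}$ (resp.\ $G_{L-2j}$) whose boundary edge facing $f_0$ — namely $v_{2j}w_{2j}$ (resp.\ $v_{2j+1}w_{2j+1}$) — lies on that arm's outer-face, so applying Lemma~\ref{lem:lower} (or its left--right mirror image) to each arm yields $j$ and $L/2-j$ nested triangles respectively. Taking the longer arm gives $\max\{j,\,L/2-j\}\ge L/4$ nested triangles, whence~\cite{FPP88} forces width and height at least $2\cdot(L/4)=L/2=n/4$. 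The cases where $f_0$ sits at an end or inside one $K_4$ are handled identically and cost at most one copy, so the $L/4$ bound survives.

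The main obstacle is the topological bookkeeping in the previous paragraph: verifying that $f_0$ really occupies a single slot of the linear chain, and — crucially — that the triangle Lemma~\ref{lem:lower} extracts inside each $K_4$ of the chosen arm is nested with respect to the \emph{drawing's own} outer-face $f_0$ (so that~\cite{FPP88} applies), rather than merely with respect to the outer-face of the induced sub-drawing. I would settle this by using the separating $2$-cuts to show that each arm is confined to one region of the complement of $f_0$, so that its triangles all lie on the same side of $f_0$ and inherit the required containment order.
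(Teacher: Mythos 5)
Your proposal is correct and follows essentially the same route as the paper: decompose the chain of crossed $K_4$'s along a $2$-cut, apply Lemma~\ref{lem:lower} to a sub-copy whose connecting edge lies on the outer-face of its induced embedding, and invoke the nested-triangle bound of \cite{FPP88}. The only difference is that the paper splits $G_{4N}$ ($n=8N$) at the fixed middle cut and argues that at least one of the two resulting copies of $G_{2N}$ satisfies the lemma's hypothesis --- which sidesteps most of the topological bookkeeping you describe --- whereas you locate the outer-face adaptively and take the longer arm; both yield the same $n/4$ bound.
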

\iffull
\begin{proof}
Fix an arbitrary integer $N$, and consider
graph $G_{4N}$ which has $n=8N$ vertices.  Observe that $G_{4N}$
contains two disjoint copies of $G_{2N}$, obtained by removing
the edges $(v_{2N},v_{2N+1})$ and $(w_{2N},w_{2N+1})$.
In any planar embedding of $G_{4N}$,
at least one of these two copies of $G_{2N}$ must have its rightmost/leftmost grid-edge on the outer-face of its induced planar embedding.   
In this copy, we therefore have $N$
nested triangles by Lemma~\ref{lem:lower}.  It is known \cite{FPP88} that 
$\ell$ nested triangles require width and height $2\ell$
in any planar poly-line drawing, which implies the result by
$2N=n/4$.
\end{proof}

If we use so-called {\em 1-fused stacked
triangles} \cite{Bie-DCG11}, then the lower bound can be
improved ever-so-slightly to $(n+2)/4$ after inserting a crossing
into {\em all} inner regions of the $2\times L$-grid; see
a preliminary version of this paper \cite{Bie-GD20} for details.  
We gave the weaker bound here because $G_L$ has two other
advantages:  it is {\em IC-planar} (no two crossings have a common
vertex) and it has maximum degree 4 (so the lower bound even
holds for orthogonal point-drawings).
\fi

\subsection{Drawing outer-planar graphs, and the approach of \cite{Auer16}}
\label{sec:BiedlReview}

We now briefly review the algorithm by Auer et al.~\cite{Auer16} to
explain where the error lies.   Their algorithm is based on
a prior algorithm (we call it here {\sc MaxOutpl}) by the author
that creates an order-preserving orthogonal bar-drawing of any maximal
outer-planar graph \cite{Bie-GD02,Bie-DCG11}.  The idea {\sc MaxOutpl}
is to fix one {\em reference-edge} $(s,t)$ with {\em poles} $s,t$ on the outer-face 
(with $s$ before $t$
in clockwise order). Then draw graph $G$ such that the bars of $s$
and $t$ occupy the top right and bottom right corner respectively.

To do so, split the graph and recurse, see also Figure~\ref{fig:BiedlReview}(a).
Specifically, consider the interior face incident to $(s,t)$ (say its
third vertex is $x$).  Of the two subgraphs ``hanging'' at the
edges $(s,x)$ and $(x,t)$, pick the smaller one.
(Formally, for any edge $(u,v)\neq (s,t)$, the
{\em hanging subgraph} $H_{uv}$ is the graph induced by all outer-face
vertices between $v$ and $u$, using the path from $v$ to $u$ that does {\em 
not} include edge $(s,t)$.)
Assume that $H_{x,t}$ is not bigger than $H_{s,x}$, but has at least three
vertices (all other cases are handled symmetrically or with another simpler
construction).  Let $\{x,y_1,t\}$
be the other interior face at $(x,t)$.  
Recursively draw the
three subgraphs $H_{s,x}$, $H_{x,y_1}$ and $H_{y_1,t}$ with respect
to reference-edges $(s,x),(x,y_1)$ and $(y_1,t)$.  After a minor modification 
of the drawings (``releasing'' one pole, defined
below) and rotating the drawing of $H_{x,y_1}$, 
these drawings can be merged as shown in Figure~\ref{fig:BiedlReview}(b).

\begin{figure}[ht]
\hspace*{\fill}
\subfigure[~]{\includegraphics[page=4,width=0.38\linewidth]{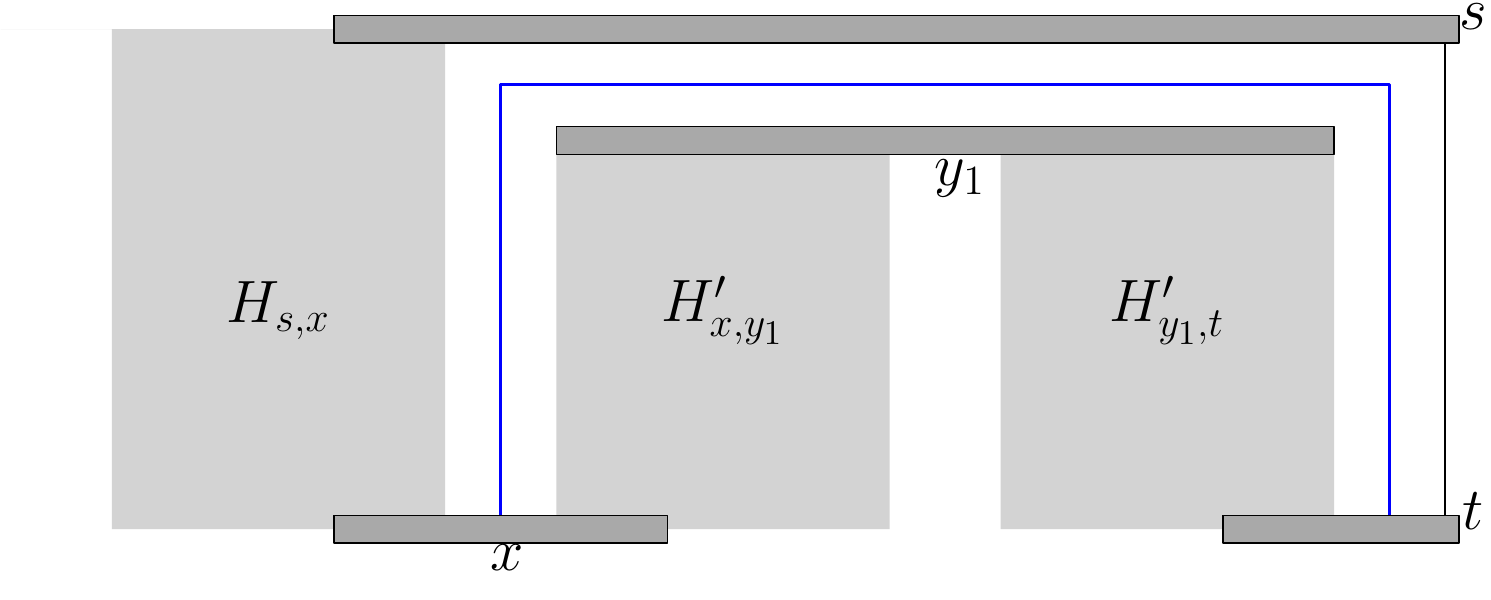}}
\hspace*{\fill}
\subfigure[~]{\includegraphics[page=1,width=0.48\linewidth]{biedl_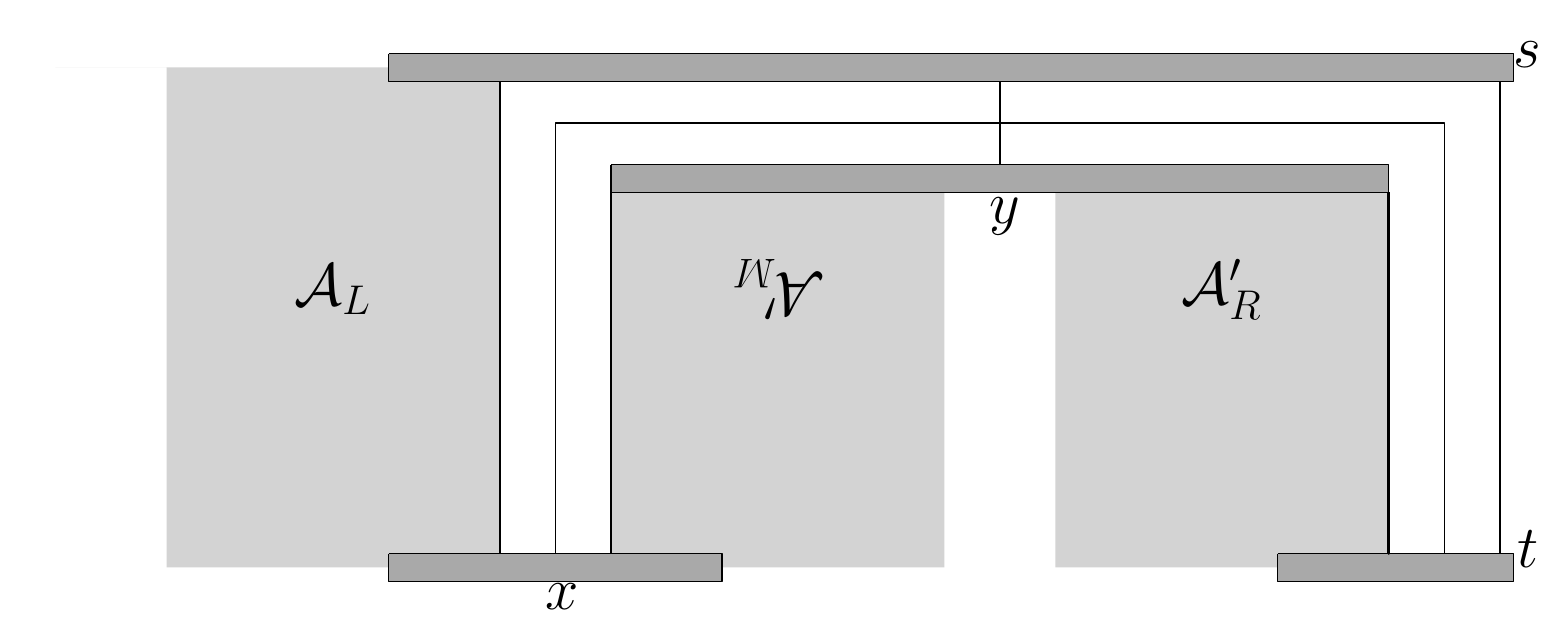}}
\hspace*{\fill} \newline
\hspace*{\fill}
\subfigure[~]{\includegraphics[page=3,width=0.48\linewidth]{biedl_construction.pdf}}
\hspace*{\fill}
\subfigure[~]{\includegraphics[page=2,width=0.48\linewidth]{biedl_construction.pdf}}
\hspace*{\fill}
\caption{The construction by the author \cite{Bie-DCG11} and by
Auer et al.~\cite{Auer16}.  (a) Splitting the graph (the red dashed edge does
not exist for outer-planar graphs).  (b) Putting drawings together in \cite{Bie-DCG11}.
(c) Putting drawings together in \cite{Auer16}.  (d) A variation of \cite{Auer16}.
}
\label{fig:BiedlReview}
\label{fig:AuerReview}
\end{figure}

Auer et al.~\cite{Auer16} used the same idea, but release other poles,
mirror drawings rather than rotate them, 
and route edge $(x,t)$ differently.
This leaves space free to also route edge $(s,y_1)$ and
removes all bends, hence giving a visibility representation.
See Figure~\ref{fig:AuerReview}(c).
However, there are a few issues with this construction:
\begin{itemize}
\item First, the logarithmic height-bound for {\sc MaxOutpl}
	crucially requires that 
	the constructed drawing is no bigger than the
	drawing of the bigger subgraph $H_{s,x}$.  This is violated
	in the construction from Figure~\ref{fig:AuerReview}(c), though
	the issue can easily be fixed by drawing one edge horizontally instead,
	see Figure~\ref{fig:AuerReview}(d).
\item Second, Auer et al.~silently assume that the region incident to $(s,t)$
	has a crossing.  If it does not, but if the
	other region incident to $(x,t)$ has a crossing,
	then it is not even clear how $y_1$ should be picked,
	and the crossing edges are not both drawn.    
\item Finally, even if the region at $(s,t)$ has
	a crossing, the crossing edge need not be $(s,y_1)$.  (Recall
	that in {\sc MaxOutpl} vertex
	$y_1$ is determined by the size of the subgraphs and cannot
	be picked arbitrarily.)  Instead, the crossing edge could connect
	$t$ to a vertex in $H_{s,x}$, and we cannot add such an edge to
	the drawing without adding bends or going through other bars.
\end{itemize}

The third issue is the one that led to our counter-example, constructed
such that if we pick $\{s,t,x,y_1\}$ to be the endpoints of a crossing,
then graph $H_{s,x}$ is {\em not} the biggest of the subgraph (and
neither is $H_{y_1,t}$), and so the logarithmic height-bound fails to hold.

\iffull
%%%%%%%%%%%%%%%%%%%%%%%%%%%%%%%%%%%%%%%%%%%%%%%%%%%%%%%%%%%%%%%%%%%%%%%%
\section{Constructions}

It should be quite obvious that if we allow crossings and some bends, 
then we {\em can}
create drawings of area $O(n\log n)$ for any outer-1-planar graph $G$.
Specifically, pick an arbitrary maximal outer-planar subgraph $G^-$ of
$G$, and let $\Gamma^-$ be an orthogonal bar-drawing of $G^-$ obtained with 
{\sc MaxOutpl}.  Since every edge
has at most two bends, every region of $\Gamma^-$ has $O(1)$ bends.
As such, any edge of $G\setminus G^-$
(which needs to be drawn through two adjacent regions) can be inserted
with $O(1)$ bends.    We now work on reducing this bound on the number
of bends and show:

\begin{theorem}
\label{thm:main}
Any outer-1-planar graph has an order-preserving orthogonal
box-drawing with at most two bends per edge and $O(n\log n)$
area.
\end{theorem}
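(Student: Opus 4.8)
The plan is to run essentially the algorithm {\sc MaxOutpl} of \cite{Bie-DCG11}, but to route the two edges of each crossing \emph{during} the recursion (rather than inserting them afterwards, as in the easy $O(n\log n)$ construction above), so that each crossing edge receives at most two bends. First I would reduce to the case where $G$ is a $2$-connected maximal outer-1-planar graph: maximalizing only adds edges and refines the given embedding, so drawing the maximal graph order-preservingly and then deleting the added edges yields an order-preserving drawing of the original $G$ with no more bends and no more area, and low connectivity is handled by the usual block-decomposition arguments. For such graphs I would use the structural fact that the \emph{skeleton}, obtained by deleting the two diagonals of every crossing, is a triangulated polygon whose interior faces are either triangles or quadrilaterals, the latter being exactly the \emph{kites}: four boundary vertices $a,b,c,d$ (in this cyclic order) inducing a $K_4$ whose crossing diagonals are $(a,c)$ and $(b,d)$. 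Crucially every skeleton edge is uncrossed, so I may always take the current reference-edge $(s,t)$ to be an uncrossed edge.

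As in {\sc MaxOutpl}, the recursion draws $G$ with respect to $(s,t)$ so that the bars of $s$ and $t$ occupy the top-right and bottom-right corners while the rightmost column is otherwise empty, which is the interface needed for gluing. I then inspect the skeleton face inside $(s,t)$. If it is a triangle $(s,t,x)$, no crossing edge is incident to this face, and I proceed verbatim as in {\sc MaxOutpl}: split off the smaller hanging subgraph, recurse on the three subgraphs $H_{s,x},H_{x,y_1},H_{y_1,t}$, and merge. If instead the face is a kite with boundary $s,t,c,d$, then the two crossing edges are its diagonals $(s,c)$ and $(t,d)$, and the three uncrossed outer edges $(t,c),(c,d),(d,s)$ carry three hanging subgraphs --- exactly the same ``three subgraph'' shape as the triangle case. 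I draw the four kite vertices as bars (placing $c$ next to $t$ and $d$ next to $s$, immediately left of the $s,t$ interface), recurse on the three hanging subgraphs, and glue them with the same size-based splitting used by {\sc MaxOutpl}. Since the kite has only $O(1)$ vertices and {\sc MaxOutpl}'s balancing keeps the height at $O(\log n)$ and the width at $O(n)$, the total area remains $O(n\log n)$.

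The heart of the argument --- and the step I expect to be the main obstacle --- is routing the two kite diagonals $(s,c)$ and $(t,d)$ each with at most two bends, while (i) crossing each other exactly once, (ii) avoiding every bar, and (iii) preserving the clean rightmost-column interface. My plan is to keep both diagonals \emph{local} to the kite: because $c$ and $d$ are the immediate boundary of the kite face, their bars sit right beside the $s,t$ interface and the three hanging subdrawings are pushed farther to the left, so each diagonal only has to traverse a constant-width channel adjacent to the kite. In that channel an L- or Z-shaped route with at most two bends connects $s$ to $c$ and $t$ to $d$, realizing the single intended crossing in the middle; the delicate point is to show that the vertical order of the four bars together with the reserved channel rows can always be chosen so that neither diagonal is forced through a subdrawing or through another bar.

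Finally I would verify the two accounting obligations. First, that the $O(1)$ rows and columns reserved for each kite can be charged against the promoted (largest) hanging subgraph, so that they do not spoil the logarithmic height bound inherited from {\sc MaxOutpl}. Second, that the bar placement, the channel routing, and the recursive merges reproduce exactly the prescribed rotation scheme, outer-face, and crossing information, so that the final drawing is order-preserving for the original embedding. Combining the triangle case (unchanged from {\sc MaxOutpl}, at most two bends per edge) with the kite case (three outer edges drawn as in {\sc MaxOutpl} plus two diagonals routed with at most two bends each) then gives an order-preserving orthogonal bar-drawing --- a special orthogonal box-drawing --- with at most two bends per edge and area $O(n\log n)$, as claimed.
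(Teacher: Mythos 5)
Your overall architecture---reduce to maximal outer-1-planar graphs, recurse on the skeleton face at the reference-edge, distinguish triangles from kites, and route the two kite diagonals locally with at most two bends---is the same as the paper's. But there is a genuine gap in the height accounting, and it sits exactly where you wave it through rather than where you flag the "main obstacle." You propose to "glue \ldots with the same size-based splitting used by {\sc MaxOutpl}" and to charge the $O(1)$ extra rows of each kite "against the promoted (largest) hanging subgraph." That charging scheme does not give logarithmic height. The split at a kite is \emph{forced}: the three hanging subgraphs sit on the fixed edges $(s,d),(d,c),(c,t)$, and nothing prevents the \emph{middle} subgraph $H_{d,c}$ from containing all but $O(1)$ of the vertices. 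MaxOutpl's $O(\log n)$ bound comes from charging extra rows only to subgraphs of at most half the size (so the recursion is $h(n)\le\max\{h(n_1),\,h(n/2)+c\}$); charging $O(1)$ rows to a subgraph of size $n-O(1)$ at every level gives $\Theta(n)$ height. The paper's own lower-bound graph $G_L$ is precisely a chain of kites whose middle subgraph dominates, so this is not a corner case---it is the instance the whole paper is about, and it is essentially the same trap that Auer et al.\ fell into. Concretely: if $H_{d,c}$ is drawn with the standard interface (poles at the top-right and bottom-right corners, spanning the full height), there is no room left in the extreme rows for the bars of $s$ and $t$, the edges $(s,d)$ and $(c,t)$, and the two diagonals, without adding rows around that dominant subdrawing.

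The paper's fix, which is the technical heart of its proof and is absent from your proposal, is to maintain \emph{four} drawing types per subgraph (type A, type $\overline{B}$, type $\underline{B}$, type C) with coordinated height bounds $h$, $h+2$, $h+2$, $h+3$, plus the invariant that at least one of the two B-type drawings has height at most $h$. The B-types shift one pole off the corner by one row, which lets a dominant middle subgraph be drawn at full height while still leaving the extreme rows free at the right end for $s$, $t$, and the diagonals. The induction then splits into cases on which hanging subgraph dominates (using thresholds $\alpha\approx 0.59$ with $\alpha^5=(1-\alpha)^3$ and $\phi=(\sqrt5-1)/2$) and chooses the drawing type of each subgraph accordingly; the constant $\gamma\approx 3.94$ in $h(G)=\gamma\log|G|+\delta$ is tuned so that every case closes. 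Without this device (or an equivalent one), your plan establishes the triangle case and the local two-bend routing of the diagonals, but not the $O(\log n)$ height bound, and hence not the $O(n\log n)$ area claim.
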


It is straight-forward to convert any planar orthogonal box-drawing
into a poly-line drawing while keeping the area asymptotically
the same and adding at most two bends per edge.  See \cite{Bie-GD14}
for details, and note that the same technique works whether the
drawing is planar or not.  Therefore our result implies:

\begin{corollary}
Any outer-1-planar graph has an order-preserving poly-line
drawing with at most four bends per edge and $O(n\log n)$ area.
\end{corollary}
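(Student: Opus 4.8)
The corollary is immediate from Theorem~\ref{thm:main}. Starting from the order-preserving orthogonal box-drawing $\Gamma$ produced there, I apply the standard box-to-point conversion of \cite{Bie-GD14}: collapse each vertex-bar to a single point and reroute the incident edges to that point. This adds at most two bends per edge, leaves the area asymptotically unchanged, and---because it acts only locally at each bar---preserves the rotation scheme and the crossing pattern, so the drawing remains order-preserving even though it is non-planar. Together with the at most two bends per edge already present in $\Gamma$, this gives at most four bends per edge in $O(n\log n)$ area. The substance therefore lies in Theorem~\ref{thm:main}, whose proof I sketch next.

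For Theorem~\ref{thm:main} the plan is to start from the reduction already indicated in the text and then force the bend-count down to two. I would fix the given outer-1-planar embedding, delete one edge from each crossing pair, and (adding outer-planar edges as needed) obtain a maximal outer-planar subgraph $G^-$. Running {\sc MaxOutpl} on $G^-$ yields an order-preserving orthogonal bar-drawing $\Gamma^-$ of area $O(n\log n)$ in which every edge has at most two bends, so every region has $O(1)$ bend-complexity. Each deleted edge is the second diagonal of a quadrilateral $a,b,c,d$ (in cyclic outer-face order) whose first diagonal, say $(a,c)$, is already drawn by $\Gamma^-$; the deleted edge $(b,d)$ must be reinserted so that it crosses $(a,c)$ exactly once. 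The naive reinsertion costs $O(1)$ bends, and the whole task is to bring this down to two.

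The key step is to make {\sc MaxOutpl} crossing-aware by drawing each crossing-quadrilateral with a fixed gadget: four bars placed so that \emph{both} diagonals can be realized with at most two orthogonal bends while crossing each other exactly once. Plugging this gadget into the recursion in place of the single interior face used for outer-planar graphs forces me to revisit the three issues flagged for the construction of Auer et al. I must choose the split vertex $x$ and the retained diagonal so that the hanging subgraphs still obey the size bound that yields logarithmic height; I must handle the case in which the reference region carries no crossing; and I must handle the case in which the crossing edge joins a pole to a vertex buried inside the larger hanging subgraph rather than to the conveniently placed $y_1$. In every case the gadget has to expose $b$ and $d$ on the boundaries of the two regions separated by $(a,c)$ so that a two-bend route across $(a,c)$ is available.

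I expect the integration itself to be the main obstacle: simultaneously keeping the drawing no larger than that of the bigger hanging subgraph---the property that pins the height at $O(\log n)$---and guaranteeing that \emph{every} reinserted diagonal, including those whose endpoints the embedding forces far apart, still admits a two-bend route through the gadget. A constant-factor refinement of the grid to open routing channels should absorb the extra bends without disturbing the asymptotic area, but checking that these channels always exist and that the crossing pattern of the given embedding is reproduced exactly is where the careful case analysis lives. Once the gadget and the placement rules are fixed, the area bound is inherited from {\sc MaxOutpl} and the two-bend bound holds by construction.
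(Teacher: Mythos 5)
Your derivation of the corollary is exactly the paper's: apply the box-to-poly-line conversion of \cite{Bie-GD14} to the drawing of Theorem~\ref{thm:main}, which adds at most two bends per edge and keeps the area asymptotically unchanged (and works regardless of planarity), giving four bends per edge in $O(n\log n)$ area. The lengthy sketch of Theorem~\ref{thm:main} that you append is not needed to establish this corollary and, for the record, does not reflect the paper's actual proof of that theorem, which proceeds by a recursive induction maintaining four drawing types ($A$, $\overline{B}$, $\underline{B}$, $C$) with balanced height bounds rather than by inserting crossing gadgets into {\sc MaxOutpl}.
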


Since we have a constant number of bends per edge, and any
outer-1-planar graph has $O(n)$ edges \cite{Auer16}, we have $O(n)$ vertical
segments in the orthogonal box-drawing.  As such, after deleting
empty columns if needed, the width is automatically $O(n)$.
Therefore all our analysis is focused on the height of the
drawing, which we prove to be in $O(\log n)$.

\subsection{Drawing types} 

Now we prove Theorem~\ref{thm:lower} with a recursive
drawing algorithm.  We roughly follow the idea of {\sc MaxOutpl},
but explicitly distinguish whether the region incident to
$(s,t)$ is crossed or not.
Crucially, we allow more types of drawings for the subgraphs
to achieve fewer bends overall.

We only draw maximal outer-1-planar graphs; one can always
make an outer-1-planar graph maximal by adding edges, and
delete those edges from the obtained drawing later.
It is known that for a maximal outer-planar graph the edges
on the outer-face have no crossing \cite{Eggleton}.

So fix a maximal outer-1-planar graph $G$ with a fixed
outer-1-planar embedding and with reference-edge $(s,t)$.
An orthogonal bar-drawing $\Gamma$ of $G$ is called
\begin{itemize}
\item a drawing of {\em type A} if the bars of $s$ and $t$ occupy
	the top right and bottom right corner of $\Gamma$, respectively
	(this is the same as for \cite{Bie-DCG11});
\item a drawing of {\em type $\overline{B}$} if the bar of $s$ occupies
	the top right corner of $\Gamma$, and the bar of $t$
	occupies the point one row below this corner;
\item a drawing of {\em type $\underline{B}$} if the bar of $t$ occupies
	the bottom right corner of $\Gamma$, and the bar of $s$
	occupies the point one row above this corner;
%\item a drawing of {\em type B} if it is either type $\overline{B}$
%	or $\underline{B}$.
\item a drawing of {\em type C} if the bars of $s$ and $t$
	occupy the bottom left and bottom right corner of $\Gamma$,
	respectively. 
\end{itemize} 
All drawings that we create are order-preserving.
In particular edge $(s,t)$ must be
drawn clockwise along the boundary of the drawing; 
Figure~\ref{fig:DrawingTypes} shows how it will
be drawn.

\begin{figure}[ht]
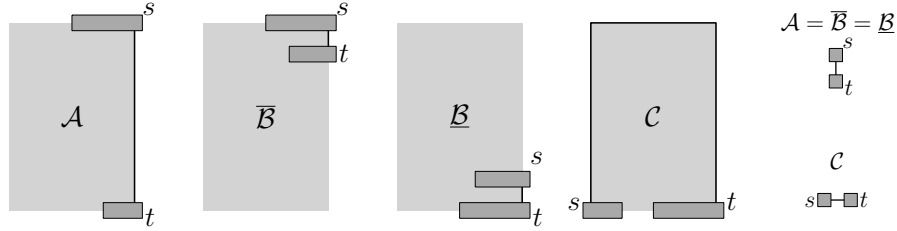

\includegraphics[width=0.19\linewidth,page=24]{construction.pdf}
\hspace*{\fill}
\includegraphics[width=0.19\linewidth,page=25]{construction.pdf}
\hspace*{\fill}
\includegraphics[width=0.19\linewidth,page=26]{construction.pdf}
\hspace*{\fill}
\includegraphics[width=0.19\linewidth,page=27]{construction.pdf}
\hspace*{\fill}
\includegraphics[width=0.15\linewidth,page=29]{construction.pdf}
\caption{The drawing-types, and the base cases.}
\label{fig:DrawingTypes}
\label{fig:base}
\end{figure}

%\todo{change numbers as needed}  
Let $\alpha\approx 0.59$ be such that $\alpha^5=(1-\alpha)^3$.
%Let $\beta\approx 0.654$ be such that $\beta^5=(1-\beta)^2$.
%Let $\nu\approx 0.432$ be such that $\nu^3=(\tfrac{1}{2}(1-\nu))^2$.
Let $\phi:=\frac{\sqrt{5}-1}{2}\approx 0.618$ be such that $\phi^2=1-\phi$.
Define 
$\gamma:= \max\{-\frac{2}{\log \phi},- \frac{3}{\log \alpha}\} 
\approx \max\{ 2.88, 3.94 \} = 3.94,$
we hence know
$$\begin{array}{ccc}
\gamma \log \alpha \leq -3, &&
\gamma \log (1{-}\alpha) 
	= \gamma \log (\alpha^{5/3}) 
	= \frac{5}{3} \gamma \log \alpha \leq -5  \\
\gamma \log \phi \leq -2, & \quad &
\gamma \log (1{-}\phi) 
	= \gamma \log(\phi^2) 
	= 2\gamma \log \phi \leq -4, 
\end{array}$$

Also set $\delta=2$.
We measure the {\em size} $|G|$ of an $n$-vertex maximal outer-1-planar graph $G$ 
as $n-1$; this may be rather unusual but will help keep the equations simpler.
Define $h(G):=\gamma \log |G|+\delta \approx 3.94\log(n-1)+2$; this is the height
that we want to achieve in our drawings.  Theorem~\ref{thm:main}
now holds if we show the following result.

\begin{lemma}
\label{lem:main}
Let $G$ be a maximal outer-1-planar graph with reference-edge
$(s,t)$.  Then $G$ has
order-preserving orthogonal bar-drawings with at most two bends
per edge and of the following kind:
\begin{itemize}
\item A type-A drawing $\calA$ of height at most $h(G)$,
\item a type-$\overline{B}$ drawing $\overline{\calB}$ of height at most $h(G)+2$,
\item a type-$\underline{B}$ drawing $\underline{\calB}$ of height at most $h(G)+2$, and
\item a type-C drawing $\calC$ of height at most $h(G)+3$.
\end{itemize}
Furthermore, at least one of $\overline{\calB}$ and $\underline{\calB}$ has height at most $h(G)$.
\end{lemma}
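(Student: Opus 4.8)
The plan is to prove Lemma~\ref{lem:main} by strong induction on the size $|G|=n-1$, establishing all four drawing types \emph{together} with the ``furthermore'' invariant, so that the invariant is available as an inductive hypothesis for the recursive calls. The base cases are the small graphs of Figure~\ref{fig:base}, for which each of $\calA$, $\overline{\calB}$, $\underline{\calB}$, $\calC$ is drawn explicitly and its constant height is checked against $h(G)$; the furthermore clause holds trivially there. For the inductive step I follow the splitting idea of {\sc MaxOutpl} reviewed in Section~\ref{sec:BiedlReview}, but I first distinguish whether the region of the embedding incident to the reference-edge $(s,t)$ is crossed or uncrossed, since (as the error analysis shows) the crossing may attach to $t$ rather than to the canonically chosen split-vertex. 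In either case I identify the relevant split-vertices, form the hanging subgraphs, recursively obtain drawings of the appropriate types, and assemble them by the standard operations (placing side-by-side, stacking, rotating, reflecting, and ``releasing'' a pole), routing the boundary edge $(s,t)$ and the two crossing edges with at most two bends each.

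The heart of the argument is that every assembly realizes a height of the form $h(G)=\max_i\{\mathrm{height}(H_i)+c_i\}$, where the $H_i$ are the recursively drawn subgraphs and each row-overhead $c_i$ is a constant (one of $0,2,3,4,5$) fixed by the target type and by how $H_i$ is stacked. Two complementary mechanisms keep this below the target. Either the larger subgraph is placed essentially flush (overhead absorbed because its size is already $<|G|$) while the smaller subgraph takes a larger overhead but is at most about $|G|/2$; or, when the incurred overhead is too large for a merely half-sized piece, the split is arranged so that \emph{both} children fall under matching thresholds. These thresholds are exactly those tuned by the precomputed constants: a ``$+2/+4$'' split is valid when the children have sizes at most $\phi|G|$ and $(1-\phi)|G|=\phi^2|G|$ (covered by $\gamma\log\phi\le-2$ and $\gamma\log(1-\phi)\le-4$), and a ``$+3/+5$'' split when they have sizes at most $\alpha|G|$ and $(1-\alpha)|G|$ (covered by $\gamma\log\alpha\le-3$ and $\gamma\log(1-\alpha)\le-5$). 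Since $\phi+\phi^2=1$ and $\alpha+(1-\alpha)=1$, these thresholds are consistent with two children whose sizes sum to $|G|$; and because $\gamma=\max\{-2/\log\phi,\,-3/\log\alpha\}$, the $\alpha$-split is the binding case while the $\phi$-split has room to spare, and the $+3$ allowance is precisely what furnishes type-C its larger budget $h(G)+3$.

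Propagating the furthermore invariant is what makes the induction close. Because a dominant child is always placed essentially flush, it must be available at height at most $h$ in the particular orientation the parent needs; for a $B$-target this is exactly the statement that at least one of $\overline{\calB}$ and $\underline{\calB}$ has height at most $h(G)$. I therefore carry this invariant as part of the inductive hypothesis and re-establish it in every case, choosing the orientation, reflection, and pole-release so that the flush-placed dominant child is only ever asked for its height-$h$ $B$-drawing (rather than the weaker $h+2$ one), and so that the newly built $G$ again offers one height-$h$ $B$-drawing on the side dictated by its own dominant child. Alongside this I verify the two-bend budget in each assembly, paying particular attention to the boundary edge $(s,t)$ and to the two crossing edges, whose routing is the feature that distinguishes this construction from the planar one of \cite{Auer16}.

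The main obstacle is the breadth and tightness of the case analysis rather than any single inequality: one must handle crossed versus uncrossed regions, both orderings of the hanging-subgraph sizes, and the possible attachments of the crossing to $s$, to $t$, or to an interior vertex, and in each resulting configuration verify all four height bounds \emph{and} the furthermore invariant \emph{and} the two-bend budget at once. The delicate point is the bookkeeping of orientations and pole-releases needed to guarantee that a flush-placed dominant child never demands more than a height-$h$ drawing; because the $\alpha$-split is tight, there is no slack to absorb a slip here, so the correctness of the chosen constants hinges on this invariant being maintained in every single branch.
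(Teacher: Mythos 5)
Your proposal matches the paper's own proof in all essentials: the same strengthened induction on $|G|$ carrying the ``furthermore'' invariant, the same case split on whether the region at $(s,t)$ is crossed, the same hanging-subgraph decomposition assembled via rotation/reflection and pole-release, and the same role for the thresholds $\alpha,\phi$ and the inequalities $\gamma\log\alpha\le-3$, $\gamma\log(1-\alpha)\le-5$, $\gamma\log\phi\le-2$, $\gamma\log(1-\phi)\le-4$ in bounding each subdrawing's height plus its constant row-overhead. The only remaining work is the explicit catalogue of constructions and per-case arithmetic, which is exactly what the paper supplies (note that by maximality the crossing in the region at $(s,t)$ always has the form $(s,y)$ crossing $(x,t)$, so the extra attachment cases you worry about do not arise).
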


We prove Lemma~\ref{lem:main} by induction on $|G|$.
In the base case, $G$ consists of only edge $(s,t)$, and one easily
constructs suitable drawings, even without bends.  See Figure~\ref{fig:base}.
The height is at most 2 in all cases.  Since $|G|=1$, we have
$\log|G|=0$ and the bound holds by $\delta=2$.

\subsection{Subgraphs and tools}

Now assume that $n\geq 3$, so $G$ has at least one inner region.
The idea is to split $G$ into subgraphs, recursively obtain their
drawings, and put them together suitably.  We have two cases
(see also Figure~\ref{fig:subgraphs}).
In Case $\Delta$, the inner region at $(s,t)$ has no crossing;
by maximality it is hence a triangle, say $\{s,t,x\}$.
We will recurse on the two hanging subgraphs $H_{s,x}$ and $H_{x,t}$, and
use $H_L:=H_{s,x}$ 
and $H_R:=H_{x,t}$ as convenient shortcuts. Observe that $|H_L|+|H_R|=|G|$
since we define the size to be one less than the number of vertices.
In Case $\times$ the inner region at $(s,t)$ is incident to a crossing, say 
edge $(s,y)$ crosses edge $(t,x)$.  By maximality the edges
$(s,x)$, $(x,y)$ and $(y,t)$ exist and have no crossing. We will recurse on 
the three hanging subgraphs 
$H_L:=H_{s,x}$, $H_M:=H_{x,y}$ and $H_R:=H_{y,t}$.
Observe that $|H_L|+|H_M|+|H_R|=|G|$.
These (two or three) subgraphs
are smaller, and we assume that they have been drawn inductively, giving
drawings $\calA_L,\overline{\calB}_L,\underline{\calB}_L,\calC_L$ for 
subgraph $H_L$, and similarly for the other subgraphs.  In the pictures,
we use \raisebox{5pt}{\rotatebox{180}{$\calA_L$}}
for drawing $\calA_L$ rotated by 180 degrees, and similarly for other drawing-types and subgraphs.

\begin{figure}[ht]
\hspace*{\fill}
\includegraphics[width=0.48\linewidth,page=1]{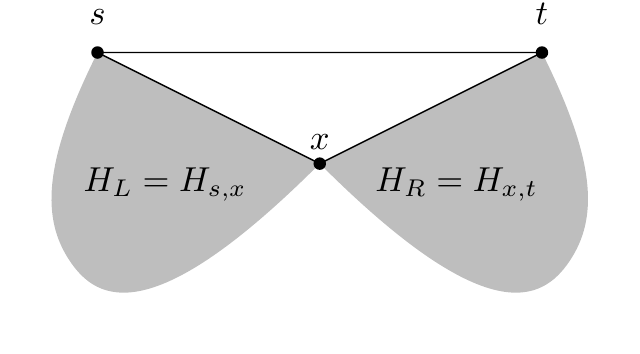}
\hspace*{\fill}
\includegraphics[width=0.48\linewidth,page=2]{subgraph.pdf}
\hspace*{\fill}
\caption{Splitting a subgraph.}
\label{fig:split}
\label{fig:subgraphs}
\end{figure} 

To put drawings together, we frequently use two well-known tools \cite{Bie-GD02,Bie-DCG11}:
\begin{itemize}
\item If we have a drawing $\Gamma$ of some subgraph, then we can insert empty rows to
	increase its height since the drawing is orthogonal.  If we choose the place to
	add empty rows suitably, then this does not change the type of the drawing.
\item If we have a drawing $\Gamma$ of some subgraph, with vertex $s$ in 
	the top row, then we can {\em release} $s$:  add a new row above 
	$\Gamma$, let $s$ occupy all of this row, and re-route edges to 
	neighbours of $s$.   (If $s$ has a horizontal neighbour $z$, then
	the edge $(s,z)$ now becomes vertical.)
	See Figure~\ref{fig:release} and \cite{Bie-DCG11} for details.
	This increases the height by 1, and achieves that
	$s$ now occupies both the top left and top right corner in the resulting
	drawing $\Gamma'$.  

	Similarly we can release vertex $t$ to occupy the bottom-left corner, presuming 
	it is drawn in the bottom row.
	In the pictures, we use a ``prime'' (e.g.~$\calA_L'$ as opposed to $\calA_L$)
	to indicate that one pole has been released; it will be clear
	from the picture which one.    
\end{itemize}

\begin{figure}[ht]
\hspace*{\fill}
{\includegraphics[width=0.69\linewidth,trim=0 10 0 10,clip]{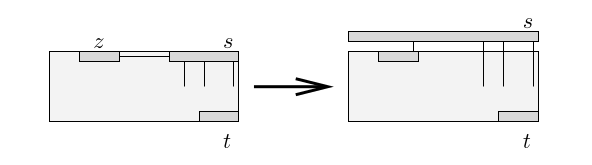}}
\hspace*{\fill}
\caption{Releasing vertex $s$.  Picture taken from \cite{Bie-DCG11}.}
\label{fig:release}
\end{figure}

\subsection{Induction step---Case $\times$}

We start with Case $\times$ where the region incident to $(s,t)$ has 
a crossing, and study the three different types of drawings that we want
to achieve.
We occasionally use $h:=h(G)$ as convenient shortcut.

\medskip\noindent{\bf Case $\times$.A: } 
We want a type-A drawing of height $h$.
We distinguish sub-cases by the size of $H_M$.

\medskip\noindent{\bf Sub-case $\times$.A.1: } $|H_{M}|  \leq \alpha |G|$
(recall that $\alpha\approx 0.59$).
We know that $|H_L|+|H_R|\leq |G|$, hence we may assume $|H_R|\leq |G|/2$ 
and use construction $\times.A.(a)$ from Figure~\ref{fig:timesA}.
(The case $|H_L|\leq |G|/2$ is symmetric and uses construction $\times.A.(b)$.)

\begin{figure}[ht]
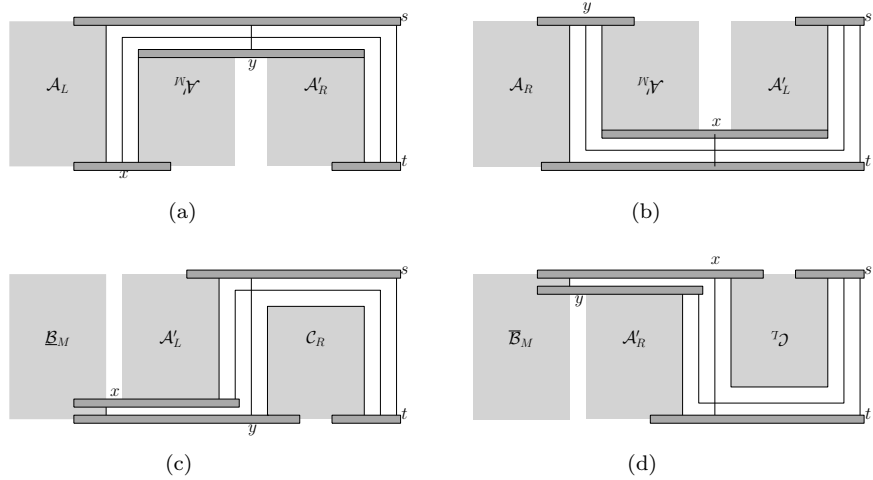

\subfigure[][]{\includegraphics[scale=0.38,page=1]{construction.pdf}}
\hspace*{\fill}
\subfigure[][]{\includegraphics[scale=0.38,page=2]{construction.pdf}}
\newline
\newline
\subfigure[~]{\includegraphics[scale=0.38,page=3]{construction.pdf}}
\hspace*{\fill}
\subfigure[~]{\includegraphics[scale=0.38,page=4]{construction.pdf}}
\caption{Constructions for $\times.A$.}
\label{fig:timesA}
\end{figure}

We will (for this case only) explain in detail how this figure is to
be interpreted; for later cases we hope that the figures alone suffice.  
We use drawings $\calA_{L}, \calA_{M}$ and $\calA_{R}$ of the 
subgraphs $H_L,H_M,H_R$.
The primes in the
figure indicate that we should release release $y$ in both $\calA_{M}$
and $\calA_{R}$ to get $\calA_M'$ and $\calA_R'$.    
Rotate $\calA_M'$ by $180^\circ$ to get $\raisebox{5pt}{\rotatebox{180}{$\calA_M'$}}$.
Increase the height of drawings, if needed, such that $\calA_R'$ and
$\raisebox{5pt}{\rotatebox{180}{$\calA_M'$}}$ have the same height; then
combine the two bars of $y$ into one.  Increase the height of $\calA_M$,
if needed,
so that it is at least two rows taller than the other two drawings.
Then we combine these drawings and route the edges $(s,y), (x,t)$ and
$(s,t)$ as shown in Figure~\ref{fig:timesA}(a).  

One can easily verify that the result is an order-preserving drawing.
To argue that it has height at most $h$, 
the general procedure is as follows.  First study the height of all three drawings of subgraphs,
which must be at most $h$.  Furthermore, at some of these subgraph-drawings 
more rows are needed, for releasing vertices
and/or routing edges and/or other bars.    
If this is the case,
then we must argue that the subgraph-drawing is sufficiently
much smaller ($h-3$ in case $\times.A.1$).  With this the
total height-requirement at most $h$ at this subgraph-drawing, and 
so other parts of the drawing are not forced to increase beyond height $h$.
After arguing this for
all three subgraphs, we therefore know that the height of the 
constructed drawing is at most $h$.

In the specific case here, the height-analysis is done as follows.
Since $|H_L|\leq |G|$, drawing $\calA_L$ has height at most
$h(H_L)\leq h(G)=h$.  
We have $|H_M|\leq \alpha|G|$, so drawing $\calA_M$ has height
at most
\begin{eqnarray*}
 h(H_M) & = & \log |H_M| + \delta \leq \gamma \log(\alpha \cdot |G|)+ \delta 
= \gamma \log|G| + \delta + \gamma \log \alpha \\
& = & h(G) + \gamma \log \alpha \leq h-3
\end{eqnarray*}
by $\gamma \log \alpha \leq -3$.
Since $|H_R|\leq \tfrac{1}{2}|G| < \alpha |G|$, likewise $\calA_R$ has height at most $h-3$.  We need three more rows above $\calA_M$ and $\calA_R$: one to release $y$, one for edge $(x,t)$ and one for the bar of $s$.  So the height requirement is at most $h$ everywhere as desired.

\medskip\noindent{\bf Sub-case $\times$.A.2: } $|H_{M}|  > \alpha |G|$.
We know that one of $\overline{\calB}_M$ or $\underline{\calB}_M$ has height at most $h(G_M)$.
Let us assume that this is $\underline{\calB}_M$, and we then use construction
$\times.A.(c)$ from Figure~\ref{fig:timesA}
(the other case uses construction $\times.A.(d)$ and is similarly analyzed).

Drawing $\underline{\calB}_M$ has height at most $h(G_M)\leq h$ by assumption.
Since $|H_R|\leq |G|-|H_M|\leq (1-\alpha)|G|$, 
drawing $\calC_R$ has height at most 
$$h(H_R)+3 
% = \gamma \log |H_R| + \delta+3 
\leq \gamma \log ((1-\alpha)|G|) + \delta + 3
= h(G) + \gamma \log(1-\alpha) + 3 % \leq h-5+3 
\leq h-2$$
by $\gamma\log(1-\alpha) \leq -5$.
We need two more rows above $\calC_R$ (for $(x,t)$ and the bar of $s$) and the
height requirement is at most $h$ here.
Drawing $\calA_L$ has height at most $h(H_L)$, which by $|H_L|\leq (1-\alpha)|G|$
is similarly shown to be at most $h-5$.  We need two rows below $\calA_L$ (for releasing $x$ and the bar of $y$).
Therefore the height requirement is at most $h$ everywhere.

\medskip\noindent{\bf Case $\times$.B: } 
We want two drawings, of type
$\overline{B}$, $\underline{B}$.  Both have height at most $h+2$,
and one has height at most $h$.

Consider first constructions $\times.B.(a)$ for a type-$\overline{B}$ drawing,  
and $\times.B.(b)$ for a type-$\underline{B}$ drawing, see Figure~\ref{fig:timesB}.   
In both, the drawing of
$H_M$ has height at most $h(H_M)+2\leq h+2$.  Drawings $\calA_L$ and $\calA_R$ have height
at most $h$, and we need two more rows at them (one to release a pole, one
for a bar of a vertex not in the subgraph).  So either drawing
has height at most $h+2$ as desired.  

But we must distinguish
cases (and perhaps use a different construction) to achieve
that one of the drawings has height at most $h$.

\begin{figure}[ht]
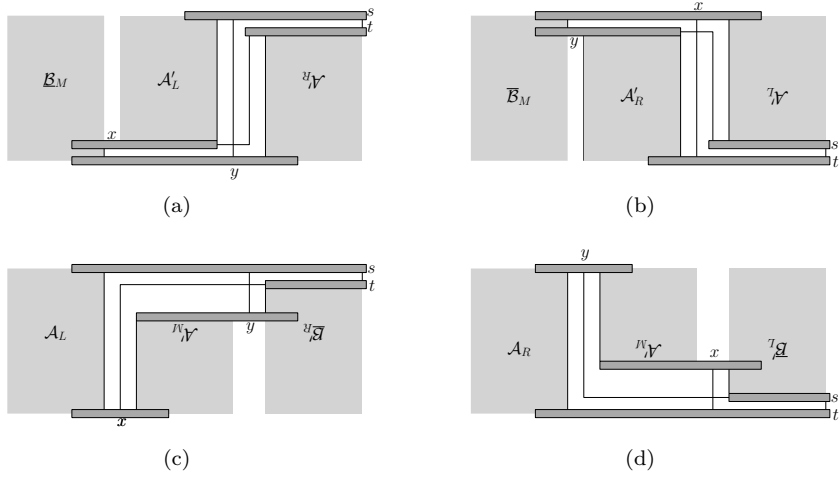

\subfigure[~]{\includegraphics[scale=0.38,page=8]{construction.pdf}}
\hspace*{\fill}
\subfigure[~]{\includegraphics[scale=0.38,page=14]{construction.pdf}}
\newline
\newline
\subfigure[~]{\includegraphics[scale=0.38,page=15]{construction.pdf}}
\hspace*{\fill}
\subfigure[~]{\includegraphics[scale=0.38,page=16]{construction.pdf}}
\caption{Constructions for $\times.\overline{B}$ and $\times.\underline{B}$.}
\label{fig:timesB}
\end{figure}

\medskip\noindent{\bf Sub-case $\times$.{B}.1: } $|H_L|,|H_R| \leq \phi|G|$
(recall that $\phi=(\sqrt{5}-1)/2\approx 0.618$).
We know that one of $\overline{\calB}_M$ or $\underline{\calB}_M$ has height at most $h(G_M)$.
Let us assume that this is $\underline{\calB}_M$, and consider again construction
$\times.B.(a)$ (in the other case one similarly analyzes construction $\times.B.(b)$).

Drawing $\underline{\calB}_M$ by assumption has height at most $h$.  
Also for $i\in \{L,R\}$ we have $|H_i|\leq \phi|G|$ and drawing $\calA_i$ has height at most
$$ h(H_i)= \gamma \log |H_i| + \delta 
\leq \gamma \log|G| + \delta + \gamma \log \phi
\leq h(G) + \gamma \log \phi \leq h-2$$
since $\gamma\log \phi \leq -2$.
We need two further rows at each of $\calA_L$ and $\calA_R$, so the height 
requirement is at most $h$ everywhere. 

\medskip\noindent{\bf Sub-case $\times$.{B}.2: } $|H_L|>\phi|G|$.
Use construction $\times.B.(c)$ to obtain a type-$\overline{B}$ drawing,
see Figure~\ref{fig:timesB}.
We know that $|H_R|\leq (1-\phi)|G|$ and hence $\underline{\calB}_R$ has
height at most
$$ h(H_R)+2 % =  \gamma \log |H_R| + \delta + 2 
\leq h(G) + 2 + \gamma \log(1-\phi) \leq h-2$$
since $\gamma\log (1-\phi) \leq -4$.
We need two more rows above it (one to release $t$ and one for the bar
of $s$), so the height requirement at $\calB_R$ is at most $h$.  Similarly the height
of $\calA_M$ is at most $h(H_M) \leq h-4$.  We need
four more rows above it: one row for releasing $y$, one row because
$\calB_R'$ has a row between $y$ and the (released) $t$, one row
for $(x,t)$ and one row for the bar of $s$.
So the height requirement is at most $h$ everywhere.

\medskip\noindent{\bf Sub-case $\times$.{B}.3: } $|H_R|>\phi|G|$.
Symmetrically construction $\times.B.(d)$ gives a type-$\underline{B}$
drawing of height $h$.

\medskip\noindent{\bf Case $\times$.C: } 
We want a type-C drawing of height $h+3$.

\medskip\noindent{\bf Sub-case $\times$.C.1: } $|H_M|\geq \tfrac{1}{2}|G|$.
We know that one of $H_L,H_R$ has size at most $\tfrac{1}{2}(|G|-|H_M|)\leq \tfrac{1}{4}|G|$.
Let us assume that this is $H_L$, and we then use construction
$\times.C.(a)$ from Figure~\ref{fig:timesC}
(the other case uses construction $\times.C.(b)$ and is similarly analyzed).

\begin{figure}[ht]
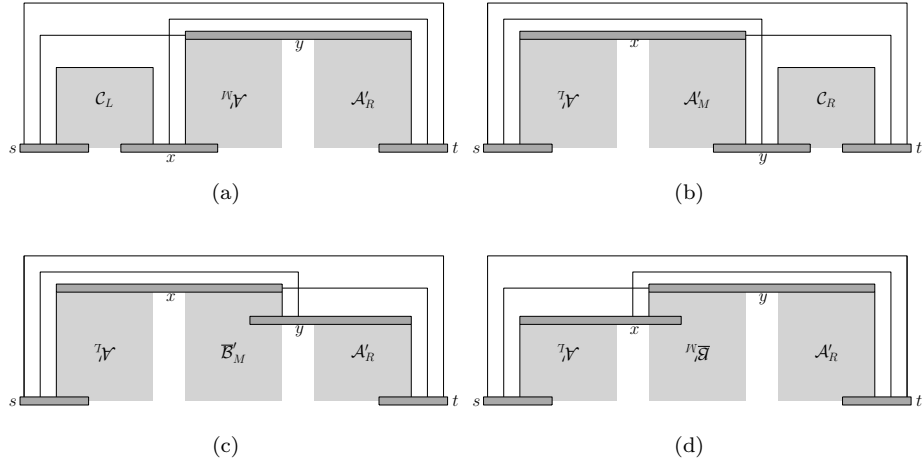

\subfigure[~]{\includegraphics[scale=0.38,page=9]{construction.pdf}}
\hspace*{\fill}
\subfigure[~]{\includegraphics[scale=0.38,page=10]{construction.pdf}}
\newline
\newline
\subfigure[~]{\includegraphics[scale=0.38,page=12]{construction.pdf}}
\hspace*{\fill}
\subfigure[~]{\includegraphics[scale=0.38,page=11]{construction.pdf}}
\caption{Constructions for $\times.C$.}
\label{fig:timesC}
\end{figure}

Drawings $\calA_M$ and $\calA_R$ both have height at most $h$,
and we need three more rows (one to release $y$, one for $(x,t)$
and one for $(s,t)$) so the height requirement here is $h+3$.
By $|H_L|\leq \tfrac{1}{4}|G|$, drawing $\calC_L$ has height at most
$$h(H_L)+3=\gamma \log |H_L| + \delta  + 3 \leq h(G) + \gamma \log (\tfrac{1}{4})+ 3 \leq  h-1$$
 by $\gamma\geq 2$.  We require three more rows above it: one for $(s,y)$,
one row that was used for $(x,t)$ elsewhere, and one row for $(s,t)$.  So the height requirement
here is actually strictly less than $h+3$.

\medskip\noindent{\bf Sub-case $\times$.C.2: } $|H_{M}|  \leq  \tfrac{1}{2} |G|$.
We know that one of $H_L,H_R$ has size at most $\tfrac{1}{2}|G|$.
Let us assume that this is $H_R$, and we then use construction
$\times.C.(c)$ from Figure~\ref{fig:timesC}
(the other case uses construction $\times.C.(d)$ and is similarly analyzed).

Drawing $\calA_L$ has height at most $h$, and we need three more rows (for releasing
$x$, edge $(s,y)$ and edge $(s,t)$), so the height requirement here is at most $h+3$.
Drawing $\overline{\calB}_M$ has height at most 
$$h(H_M) +2 % = \gamma \log |H_M| + \delta  + 2 
\leq h(G) + \gamma \log (\tfrac{1}{2})  +2 \leq h$$
by $\gamma\geq 2$.
Again we need three more rows, so the height requirement is at most $h+3$.
Similarly by $|H_R|\leq \tfrac{1}{2}|G|$ drawing $\calA_R$ has height at most $h-2$.
We need five more rows at $\calA_R$:  one row for releasing $y$,
one row because $\overline{\calB}_M'$ had one row between $y$ and the
(released) $x$, one row for $(x,t)$, one row that was used for $(s,y)$
elsewhere, and one row for $(s,t)$.   
So the height requirement everywhere is at most $h+3$.  

\subsection{Induction step---Case $\Delta$}

Now we turn our attention to the (much simpler) case $\Delta$ where the region incident
to edge $(s,t)$ has no crossing. We again distinguish cases by the drawing-type 
that we want to achieve.

\medskip\noindent{\bf Case $\Delta$.A: } 
We want a type-A drawing of height $h$.

\medskip\noindent{\bf Sub-case $\Delta$.A.1: } $|H_L|,|H_R|\leq \phi|G|$.
Then use construction $\Delta.A.(a)$ from Figure~\ref{fig:DeltaA}.
Drawing $\overline{\calB_L}$ has height at most 
$$ h(H_L)+2 = \gamma \log|H_L| + \delta + 2 \leq h(G) + \gamma \log\phi + 2 \leq h$$
since $\gamma \log \phi \leq -2$.
Similarly drawing ${\calA_R}$ has height at most $h-2$
and we need two more
rows above it
(for releasing $x$ and the bar of $s$).   So the height requirement is at
most $h$ everywhere.

\begin{figure}[ht]
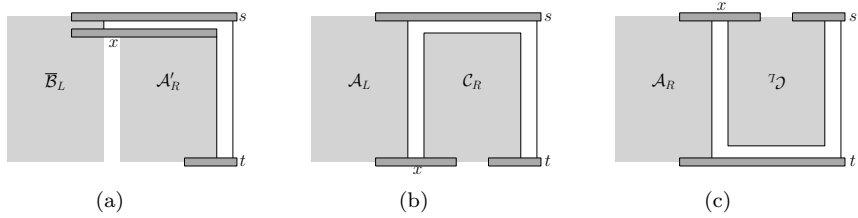

\hspace*{\fill}
\subfigure[~]{\includegraphics[scale=0.38,page=18]{construction.pdf}}
\hspace*{\fill}
\subfigure[~]{\includegraphics[scale=0.38,page=19]{construction.pdf}}
\hspace*{\fill}
\subfigure[~]{\includegraphics[scale=0.38,page=20]{construction.pdf}}
\hspace*{\fill}
\caption{Constructions for $\Delta.A$.}
\label{fig:DeltaA}
\end{figure}

\medskip\noindent{\bf Sub-case $\Delta$.A.2: } $|H_L|> \phi|G|$.
Then use construction $\Delta.A.(b)$.
Drawing $\calA_L$ has height at most $h$.
By $|H_R|\leq |G|-|H_L|\leq (1-\phi)|G|$,
drawing $\calC_R$ has height at most
$$h(H_R)+3 % = \gamma \log |H_R| + \delta  + 3 
\leq h(G) + \gamma \log (1-\phi)+ 3 \leq  h-1$$
since $\gamma \log (1-\phi)\leq -4$.
We require one more row above it (for the bar of $s$), so the height 
requirement is at most $h$ everywhere.

\medskip\noindent{\bf Sub-case $\Delta$.A.3: } $|H_R|>\phi|G|$.
Symmetrically one proves that construction $\Delta.A.(c)$ has height at most $h$.

\medskip\noindent{\bf Case $\Delta$.B: } 
We want two drawings, of type
$\overline{B}$, $\underline{B}$.  Both have height at most $h+2$,
and one has height at most $h$.

The construction for the type-$\overline{B}$ drawing is in Figure~\ref{fig:DeltaB}(a).
Since $\calA_L$ and $\calA_R$ have height at most $h$, and we need two
further rows above $\calA_R$, the height requirement is at most $h+2$ everywhere.  
If $|H_R|\leq \tfrac{1}{2} |G|$
then the height of $\calA_R$ is at most 
$$h(H_R) %= \gamma \log |H_R| + \delta
\leq h(G) +   \gamma \log \tfrac{1}{2} \leq h-2$$ 
by $\gamma\geq 2$ and so the height requirement is at most $h$ everywhere.

\begin{figure}[ht]
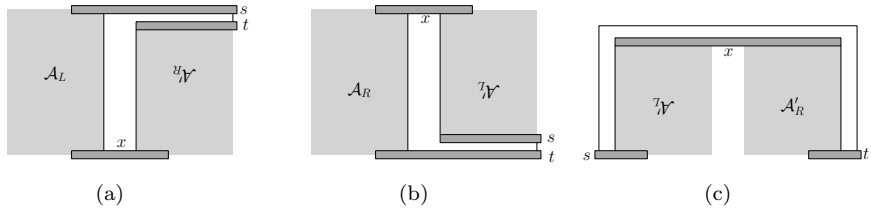

\hspace*{\fill}
\subfigure[~]{\includegraphics[scale=0.38,page=21]{construction.pdf}}
\hspace*{\fill}
\subfigure[~]{\includegraphics[scale=0.38,page=22]{construction.pdf}}
\hspace*{\fill}
\subfigure[~]{\includegraphics[scale=0.38,page=23]{construction.pdf}}
\hspace*{\fill}
\caption{Constructions for $\Delta.\overline{B}$ and $\Delta.\underline{B}$, as well as for $\Delta.C$.}
\label{fig:DeltaB}
\end{figure}

Likewise, the construction of a type-$\underline{B}$ drawing in Figure~\ref{fig:DeltaB}(b)
has height at most $h+2$, and if $|H_L|\leq \tfrac{1}{2} |G|$ then the height is at most $h$.
Since one of $H_L$ and $H_R$ has size at most $\tfrac{1}{2}|G|$,
one of the drawings has height at most $h$. 

\medskip\noindent{\bf Case $\Delta.C$: } 
We want a type-C drawing of height at most $h+3$.
The construction is shown in Figure~\ref{fig:DeltaB}(c).
Since $\calA_L$ and $\calA_R$ have height at most $h$, and we need two
more rows above them (to release $x$ and for edge $(s,t)$), the height is 
actually at most $h+2$. 

\subsection{Putting it all together}

We have given suitable constructions in all cases, so by induction
Lemma~\ref{lem:main} holds.  Using the type-A drawing, we get a
drawing of height $3.94\log(n-1)+2$ and width $O(n)$.  Therefore
Theorem~\ref{thm:main} holds.  Following the proof, one also sees
that the drawing can easily be found in linear time, since we can
construct the four drawings of each hanging subgraph in constant
time from the drawings of its subgraph.

%%%%%%%%%%%%%%%%%%%%%%%%%%%%%%%%%%%%%%%%%%%%%%%%%%%%%%%%%%%%%%%%%%%%%%%%
\section{Conclusion}

In this paper, we pointed out an error in a result by Auer et al., and
show that for some outer-1-planar graphs, any poly-line  drawing without
crossings requires $\Omega(n^2)$ area.  We
then studied orthogonal box-drawings of outer-1-planar graphs that achieve
small area.  We create such drawings (using bars to represent vertices)
that have $O(n\log n)$ area and at most 2 bends per edge, and exactly
reflect the given outer-1-planar embedding.

We believe that reducing the number of bends per edge should be possible,
and in particular, conjecture that we can achieve $O(n\log n)$ area with
at most one bend per edge, perhaps at the expense of modifying the 1-planar
embedding.  On the other hand, finding drawings with 0 bends (i.e.,
visibility representations) appears difficult.  Perhaps 
{\em bar-1-visibility drawings} (where edges are allowed to go through
up to one bar of a vertex) may be possible while keeping the area
sub-quadratic.

Straight-line drawings of outer-1-planar graphs are also of interest.
It is known that there are order-preserving outer-1-planar straight-line
drawings of area $O(n^2)$ \cite{Auer16}.  Are there straight-line
drawings of sub-quadratic area (again perhaps at the expense of not
respecting the 1-planar embedding)?

\else % not full

%%%%%%%%%%%%%%%%%%%%%%%%%%%%%%%%%%%%%%%%%%%%%%%%%%%%%%%%%%%%%%%%%%%%%%%%
\section{Outlook}

Where is the error in \cite{Auer16}?  They used
a visibility representation of area $O(n\log n)$ of an outer-planar
subgraph $G'$ \cite{Bie-DCG11}, and added the edges of $G\setminus G'$.
The drawing of \cite{Bie-DCG11} is created by splitting the graph,
drawing parts recursively, and putting them together.  Auer et al.~assume
that the edges of $G\setminus G'$ occur in one particular way 
relative to this graph-split. 
But the graph-split is determined by the size of the sub-graphs, and
so they have missed some cases where edges of $G\setminus G'$ could be.  (As our results show,
it would be impossible to do the other cases without adding crossings
or increasing the area.)

We {\em can} achieve $O(n\log n)$ area if we 
allow crossings and bends, and even exactly reflect the outer-1-planar
drawing.  To do so, use again
the visibility representation of \cite{Bie-DCG11}
of an outer-planar subgraph  $G'$.  Adding edges of $G\setminus G'$
can easily be done if we allow crossings and up to 4 bends per edge;
the area at most doubles.
As we will show in a forthcoming paper, up to 2 bends
per edge is sufficient if we modify the algorithm of \cite{Bie-DCG11} a bit.
Achieving $O(n\log n)$ area and 0 bends remains an open problem.
%It remains open whether every outer-1-planar graph has
%a visibility representation if crossings are allowed.
%\todo{talk more about area or delete}

\fi
\bibliographystyle{plain}
\bibliography{full,gd,papers}

\end{document}